\documentclass[runningheads]{llncs}

\usepackage{llncsdoc}
\usepackage{epsfig}
\usepackage{graphicx}
\usepackage[leqno]{amsmath}
\usepackage{amssymb}
\usepackage{amsmath}
\usepackage{psfrag}
\usepackage{rotating}
\usepackage[usenames]{color}
\usepackage{multirow}

 

\newcommand{\R}{\mathbb{R}}

\newcommand{\p}{\varphi}

\newcommand{\eps}{\varepsilon}
\newcommand{\dgm}{\textrm{Dgm}}
\newcommand{\cost}{\mathrm{cost}}\newcommand{\newatop}[2]{\genfrac{}{}{0pt}{1}{#1}{#2}}

\newcommand\restr[2]{{
  \left.\kern-\nulldelimiterspace 
  #1 
  \vphantom{\big|} 
  \right|_{#2} 
  }}


\begin{document}

\pagestyle{headings}

\mainmatter

\title{The coherent matching distance\\ in 2D persistent homology\thanks{Work carried out under the auspices of INdAM-GNSAGA. M.E. has been partially supported by the Toposys project FP7-ICT-318493-STREP, as well as an ESF Short Visit grant under the Applied and Computational Algebraic Topology networking programme. A.C. is partially supported by the FP7 Integrated Project IQmulus, FP7-ICT-2011-318787, and the H2020 Project Gravitate, H2020 - REFLECTIVE - 7 - 2014 - 665155.}}

\titlerunning{The coherent matching distance in 2D persistent homology} 

\authorrunning{A. Cerri, M. Ethier and P. Frosini} 

\author{Andrea Cerri\inst{1} \and Marc Ethier\inst{2,3} \and Patrizio Frosini\inst{4}}

\institute{IMATI -- CNR, Genova, Italia\\
\email{andrea.cerri@ge.imati.cnr.it}
\and
Facult\'e des Sciences, Universit\'e de Saint-Boniface, Winnipeg, Manitoba, Canada\\
\email{methier@ustboniface.ca}
\and
Institute of Computer Science and Computational Mathematics, Jagiellonian University, Krak\`ow, Poland
\and
Dipartimento di Matematica, Universit\`a di Bologna, Italia\\
\email{patrizio.frosini@unibo.it}
}

\maketitle             

\begin{abstract}
Comparison between multidimensional persistent Betti numbers is often based on the multidimensional matching distance. While this metric is rather simple to define and compute by considering a suitable family of filtering functions associated with lines having a positive slope, it has two main drawbacks. First, it forgets the natural link between the homological properties of filtrations associated with lines that are close to each other. As a consequence, part of the interesting homological information is lost. Second, its intrinsically discontinuous definition makes it difficult to study its properties. In this paper we introduce a new matching distance for 2D persistent Betti numbers, called \emph{coherent matching distance} and based on matchings that change coherently with the filtrations we take into account. Its definition is not trivial, as it must face the presence of monodromy in multidimensional persistence, i.e. the fact that different paths in the space parameterizing the above filtrations can induce different matchings between the associated persistent diagrams. In our paper we prove that the coherent 2D matching distance is well-defined and stable.\\

\noindent {\bf Keywords:} Multidimensional matching distance, multidimensional persistent Betti numbers, monodromy
\end{abstract}

\section*{Introduction}
In the last twenty-five years the concept of \emph{topological persistence} has become of common use in computational geometry and topological data analysis. It is based on the idea that the most important properties of a filtered topological space are the ones that persist under large changes of the parameters defining the sublevel sets in the filtration. The concept of persistence revealed quite useful in extracting information from data that can be described by $\R^h$-valued functions defined on a topological space (e.g., images or point clouds representing 3D-models, via a distance function). 

The theory of topological persistence was initially developed for the case $h=1$, but in recent years the interest in the case $h>1$ has rapidly increased, leading to new theoretical developments and computational methods (cf., e.g., \cite{BiCe*11,CaSiZo10,CaZo09,Le15}). One of this methods is based on a reduction of the $h$-dimensional case to the $1$-dimensional setting by using a suitable family of derived real-valued functions \cite{BiCe*08,CaDFFe10,CeDi*13}. If $h=2$, it consists of changing the 2D filtration given by a filtering function $f=(f_1,f_2):M\to \R^2$ into the 1D filtrations associated with the real-valued functions $f_{a,b}^*:M\to\R$ defined as $f_{a,b}^*(x):=\min\{a,1-a\}\cdot\max\left\{\frac{f_1(x)-b}{a},\frac{f_2(x)+b}{1-a}\right\}$, for $a\in ]0,1[$ and $b\in\R$. This approach allows for introducing a distance $D_{match}(\beta_f,\beta_g)$ between the persistent Betti numbers associated with $f$ and $g$, which is defined as the supremum of the classical bottleneck distance between the persistent diagrams of $f_{a,b}^*$ and $g_{a,b}^*$, varying $a$ and $b$.

While this method brings back the problem to the 1D case, it opens the way to new issues of interest. First of all, the distance $D_{match}$ forgets the natural link between the homological properties of filtrations associated with pairs $(a,b)$ that are close to each other. This fact implies that part of the homological information is lost. Second, its intrinsically discontinuous definition makes it difficult to study its properties. 

As a possible answer to these observations, we introduce in this paper a new matching distance for 2D persistent Betti numbers, called \emph{coherent matching distance} and based on the use of matchings that change continuously with respect to the filtrations we take into account. In order to state its definition, we have to manage the problem of monodromy, consisting of the fact that a loop in the space parameterizing the above filtrations can induce a transformation that changes a matching $\sigma$ between the associated persistent diagrams into a matching $\tau\neq\sigma$~\cite{CeEtFr13}. 

The paper is organized as follows. In Section~\ref{MS}, we recall the definitions of multidimensional persistent Betti number and multidimensional matching distance, together with the monodromy phenomenon in 2D persistent homology. In Section~\ref{CMD}, we introduce  the coherent 2D matching distance and prove that it is well-defined and stable. 

\section{Mathematical setting}\label{MS}

Let $f=(f_1,f_2)$ be a continuous map from a finitely triangulable topological space $M$ to the real plane $\R^2$.

\subsection{Persistent Betti numbers}\label{PBN}

As a reference for multidimensional persistent Betti numbers we use \cite{CeDi*13}. According to the main topic of this paper, we will also stick to the notations and working assumptions adopted in \cite{CeEtFr13}. In particular, we build on the strategy adopted in the latter to study certain instances of monodromy for multidimensional persistent Betti numbers. Roughly, the idea is to reduce the problem to the analysis of a collection of persistent Betti numbers associated with a real-valued function, and their compact representation in terms of \emph{persistence diagrams}.

We use the following notations: $\Delta^+$ is the open set $\{(u,v)\in\R\times\R:u< v\}$. $\Delta$ represents the diagonal set $\{(u,v)\in\R\times\R:u= v\}$. We can further extend $\Delta^+$ with points at infinity of the kind $(u,\infty)$, where $|u|<\infty$. Denote this set $\Delta^*$. For a continuous function $\p:M\to\R$, and for any $n\in\mathbb{N}$, if $u<v$, the inclusion map of the sublevel set $M_u=\{x\in M:\p(x)\leq u\}$ into the sublevel set $M_v=\{x\in M:\varphi(x)\leq v\}$ induces a homomorphism from the $n$th homology group of $M_u$ into the $n$th homology group of $M_v$. The image of this homomorphism is called the {\em $n$th persistent homology group of $(M,\p)$ at $(u,v)$}, and is denoted by $H_n^{(u,v)}(M,\p)$. In other words, the group $H_n^{(u,v)}(M,\p)$ contains all and only the homology classes of $n$-cycles born before or at $u$ and still alive at $v$. 

Following \cite{CeDi*13}, we assume the use of \v{C}ech homology, and refer the reader to that paper for a detailed explanation about preferring this homology theory to others. Also, we work with coefficients in a field $\mathbb{K}$, so that homology groups are vector spaces. Therefore, they can be completely described by their dimension, leading to the following definition \cite{EdLeZo02}.

\begin{definition}[Persistent Betti Numbers]\label{Rank}
The {\em persistent Betti numbers function} of $\p$, briefly PBN, is the function $\beta_{\p}:\Delta^+\to\mathbb{N}\cup\{\infty\}$ defined by
\begin{displaymath}
\beta_{\p}(u,v)=\dim H_n^{(u,v)}(M,\p).
\end{displaymath}
\end{definition}
Under the above assumptions for $M$, it is possible to show that $\beta_{\p}$ is finite for all $(u,v)\in\Delta^+$ \cite{CeDi*13}. Obviously, for each $n\in\mathbb{Z}$, we have different PBNs of $\p$ (which might be denoted by $\beta_{\p,n}$, say), but for the sake of notational simplicity we omit adding any reference to $n$.

The PBNs of $\p$ can be simply and compactly described by the corresponding \emph{persistence diagrams}. Formally, a persistence diagram can be defined via the notion of \emph{multiplicity} \cite{CoEdHa07,FrLa01}. Following the convention used for PBNs, any reference to $n$ will be dropped in the sequel.
\begin{definition}[Multiplicity]\label{Multiplicity}
The \emph{multiplicity} $\mu_{\p}(u,v)$  of $(u,v)\in\Delta^+$ is the finite, non-negative number given by 
\begin{equation*}
\min_{\newatop{\eps>0}{u+\eps<v-\eps}} \beta_{\p}(u+\eps ,v-\eps)-\beta_{\p}(u-\eps ,v-\eps)-\beta_{\p}(u+\eps,v+\eps)+\beta_{\p}(u-\eps ,v+\eps).
\end{equation*}
The \emph{multiplicity} $\mu_{\varphi}(u,\infty)$  of $(u,\infty)$ is the finite, non-negative number given by {\setlength\arraycolsep{1pt}
\begin{equation*}
\min_{\eps > 0,\,u+\eps<v} \beta_{\p}(u+\eps,v)-\beta_{\p}(u-\eps ,v).
\end{equation*}}
\end{definition}
\begin{definition}[Persistence Diagram]\label{persDiag}
The persistence diagram $\dgm(\p)$ is the multiset of all points $(u,v)\in\Delta^*$ such that $\mu_{\p}(u,v)>0$, counted with their multiplicity, union the points of $\Delta$, counted with infinite multiplicity.
\end{definition}
Each point $(u,v)\in \Delta^*$ with positive multiplicity will be called a \emph{cornerpoint}. A cornerpoint $(u,v)$ will be said to be a \emph{proper cornerpoint} if $(u,v)\in \Delta^+$, and a \emph{cornerpoint at infinity} if $(u,v)\in \Delta^*\setminus \Delta^+$.

\subsection{2-dimensional setting} 
The definition of persistent Betti numbers can be easily extended to $\R^h$-valued functions \cite{CeDi*13}. It has been proved that, in this case, the information enclosed in the persistent Betti numbers is equivalent to that represented  by the set of persistent Betti numbers associated with a certain family of real-valued functions. We discuss this for the specific case of the above function $f:M\to\R^2$, referring the reader to Figure~\ref{foliation} for a pictorial representation.

\begin{figure}[ht]
\begin{center}
\psfrag{SS}{\tiny $s(a,1-a)+(b,-b)$}
\psfrag{TT}{{\tiny $t(a,1-a)+(b,-b)$}}
\psfrag{ro}{$\dgm(f_{a,b}^*)$}
\psfrag{(b,-b)}{$(b,-b)$}
\psfrag{b1+b2=0}{$u+v=0$}
\psfrag{s}{$s$}\psfrag{t}{$t$}
\psfrag{(a,b)}{$(a,b)$}
\psfrag{f1}{$f_1$}
\psfrag{f2}{$f_2$}
\psfrag{u}{$u$}\psfrag{v}{$v$}
\psfrag{X}{$r_{a,b}$}
\psfrag{l}{$(a,1-a)$}
\includegraphics[width=0.9\textwidth]{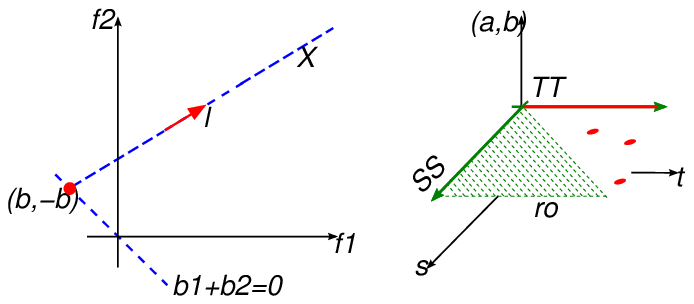}
\end{center}
\caption{Correspondence between an admissible line $r_{a,b}$
and the persistence diagram $\dgm(f_{a,b}^*)$}
\label{foliation}
\end{figure}

Consider the pairs $(a,b)\in\,]0,1[\,\times\R$. Any such pair identifies an oriented line $r_{a,b}\in\R^2$ of positive slope, parameterized by $t$ with equation $(u,v)=t\cdot (a,1-a)+(b,-b)$. The space $\Lambda$ of lines obtained according to this procedure is referred to as the \emph{set of admissible lines}, whereas $P(\Lambda)$ denotes the set of pairs $(a,b)$ parameterizing $\Lambda$. The generic point $(u,v)=t\cdot (a,1-a)+(b,-b)$ of $r_{a,b}$ can be associated with the sublevel set of $M$ defined as $\{x\in M: f_1(x)\leq u,\,f_2(x)\leq v\}$, which is equivalent to that given by $\{x\in M: f_{a,b}(x)\leq t\}$ induced by the real-valued function $f_{a,b}:M\to\R$ with $f_{a,b}(x):=\max\left\{\frac{f_1(x)-b}{a},\frac{f_2(x)+b}{1-a}\right\}$. In this setting, the Reduction Theorem proved in \cite{CeDi*13} states that the persistent Betti numbers $\beta_{f}$ can be completely recovered by considering all and only the persistent Betti numbers $\beta_{f_{a,b}}$ associated with the admissible lines $r_{a,b}$, which are in turn encoded in the corresponding persistence diagrams $\dgm(f_{a,b})$.

\subsubsection{2-dimensional matching distance}\label{2DMD}

Assume now that we have two continuous functions $f,g:M\to\R^2$. We consider the persistence diagrams $\dgm(f_{a,b})$, $\dgm(g_{a,b})$ associated with the admissible line $r_{a,b}$, and normalize them by multiplying their points by $\min\{a,1-a\}$. This is equivalent to consider the \emph{normalized persistence} diagrams $\dgm(f_{a,b}^*)$, $\dgm(g_{a,b}^*)$, with $f_{a,b}^*=\min\{a,1-a\}\cdot f_{a,b}$ and $g_{a,b}^*=\min\{a,1-a\}\cdot g_{a,b}$, respectively. The 2-dimensional matching distance $D_{match}(\beta_f,\beta_g)$ \cite{BiCe*08} is then defined as 
$$
D_{match}(\beta_f,\beta_g)=\sup_{P(\Lambda)}d_B(\dgm(f_{a,b}^*),\dgm(g_{a,b}^*)),
$$ 
with $d_B(\dgm(f_{a,b}^*),\dgm(g_{a,b}^*))$ denoting the usual bottleneck distance between the normalized persistence diagrams $\dgm(f_{a,b}^*)$ and $\dgm(g_{a,b}^*)$ \cite{CoEdHa07,DaFrLa10}.

\begin{remark}\label{remNormStability}
The introduction of normalized persistence diagrams is crucial here. Indeed, the bottleneck distance $d_B(\dgm(f_{a,b}^*),\dgm(g_{a,b}^*))$ is stable against functions' perturbations when measured by the $\sup$-norm, while this is not true for the distance $d_B(\dgm(f_{a,b}),\dgm(g_{a,b}))$, see \cite[Thm.~4.2]{CeDi*13} for details.
\end{remark}

\subsubsection{Monodromy in 2-dimensional persistent homology}
We know that normalized persistence diagrams are stable with respect to changes of the underlying functions, when the $\sup$-norm is considered (Remark~\ref{remNormStability}). Since each function $f_{a,b}^*$ depends continuously on the parameters $a$ and $b$ with respect to the $\sup$-norm, it follows that the set of points in $\dgm(f_{a,b}^*)$ depends continuously on the parameters $a$ and $b$. Analogously, the set of points in $\dgm(g_{a,b}^*)$ depends continuously on the parameters $a$ and $b$. Suppose that $\sigma_{a,b}:\dgm(f_{a,b}^*)\to \dgm(g_{a,b}^*)$ is an \emph{optimal matching}, i.e. one of the matchings achieving the bottleneck distance $d_B(\dgm(f_{a,b}^*),\dgm(g_{a,b}^*))$. Given the above arguments, a natural question arises, whether $\sigma_{a,b}$ changes continuously varying $a$ and $b$. In other words, we wonder if it is possible to straighforwardly introduce a notion of \emph{coherence} for optimal matchings with respect to the elements of $P(\Lambda)$.

Perhaps surprisingly, the answer is no. A first obstruction is given by the fact that, trying to continuously extend a matching $\sigma_{a,b}$, the identity of points in the (normalized) persistent diagrams is not preserved when considering an admissible pair $(\bar a,\bar b)$ for which either $\dgm(f_{\bar a,\bar b}^*)$ or $\dgm(g_{\bar a,\bar b}^*)$ has points with multiplicity greater than 1. In other words, we cannot follow the path of a cornerpoint when it collides with another cornerpoint. On the one hand, this problem can be solved by replacing $P(\Lambda)$ with its subset $\textrm{Reg}(f)\cap \textrm{Reg}(g)$, with 
{\setlength\arraycolsep{2pt}
\begin{eqnarray*}
\textrm{Reg}(f) & = & \{(a,b)\in P(\Lambda) | \textrm{$\dgm(f_{a,b}^*)$  does not contain multiple points}\},\\
\textrm{Reg}(g) & = & \{(a,b)\in P(\Lambda) | \textrm{$\dgm(g_{a,b}^*)$  does not contain multiple points}\}.
\end{eqnarray*}
Throughout the rest of the paper, we will talk about \emph{singular pairs for $f$} to denote the pairs $(a,b)\in P(\Lambda)\setminus \textrm{Reg}(f)$, and about  \emph{regular pairs for $f$} to denote the pairs $(a,b)\in \textrm{Reg}(f)$. An analogous convention holds referring to the singular and regular pairs for $g$. 

On the other hand, however, continuously extending a matching $\sigma_{a,b}$ presents some problems even in this setting. Roughly, the process of extending $\sigma_{a,b}$ along a path $c:[0,1]\to\textrm{Reg}(f)\cap\textrm{Reg}(g)$ depends on the homotopy class of $c$ relative to its endpoints. This phenomenon is referred to as \emph{monodromy in 2-dimensional persistent homology}, and has been studied for the first time in \cite{CeEtFr13}.} In what follows we will show how to overcome this issue in order to define a \emph{coherent} modification of the standard 2-dimensional matching distance $D_{match}$.

\subsection{Working assumptions}\label{workAss}
To simplify the exposition, in what follows we state our results by assuming that $M$ is homeomorphic to $S^{m}$, with $m\geq 2$. In particular, this implies that all normalized persistence diagrams $\dgm(f_{a,b}^*)$, $\dgm(g_{a,b}^*)$ contain a single cornerpoint at infinity in degree 0 and $n$, and no cornerpoint at infinity in the other homology degrees. In this way, the problem of continuously extending a matching can be restricted to considering only proper cornerpoints, as there are no ambiguities in following the evolution of cornerpoints at infinity. Also, we assume that 
\begin{enumerate}
\item the functions $f,g:M\to\R^2$ are \emph{normal}, i.e. the sets of singular pairs for $f$ and $g$ are discrete \cite{CeEtFr13};
\item a constant real value $k>0$ exists such that if two proper cornerpoints $X_1,X_2$ of $\dgm^*(f_{a,b})$ have Euclidean distance less than $k$ from $\Delta$, then the Euclidean distance between $X_1$ and $X_2$ is not smaller than $k$, for all $(a,b)\in P(\Lambda)$. The same property holds for $\dgm^*(g_{a,b})$.
\end{enumerate}

\section{The coherent 2-dimensional matching distance}\label{CMD}
The existence of monodromy implies that each loop in $\textrm{Reg}(f)$ induces a permutation on $\dgm(f_{a,b}^*)$. In other words, it is not possible to establish which point in $\dgm(f_{a,b}^*)$ corresponds to which point in $\dgm(f_{a',b'}^*)$ for $(a,b)\neq (a',b')$, since the answer depends on the path that is considered from $(a,b)$ to $(a',b')$ in the parameter space $\textrm{Reg}(f)$. As a consequence, different paths going from $(a,b)$ to $(a',b')$ might produce different results while extending a matching $\sigma_{a,b}$. However, it is still possible to define a notion of coherent 2-dimensional matching distance.

\subsection{Transporting a matching along a path}\label{T}
First, we need to specify the concept of transporting a proper cornerpoint $X\in\dgm(\varphi)$ along a homotopy $h(\tau,x) := (1-\tau)\cdot\varphi(x) + \tau\cdot\psi(x)$, with $\varphi,\psi:M\to\R$.

\begin{definition}[Admissible path]\label{admPath}
Let $p\in [0,1]$. A continuous path $P:[0,p]\to\Delta^+\cup\Delta$ is said to be \emph{admissible for $h$ at $\bar p\in [0,p]$} if the following hold:
\begin{enumerate}
\item $P(\tau)\in\dgm(h(\tau,\cdot))$ for $\tau\in [0,p]$;
\item $P([0,p])\cap \Delta$ is finite; 
\item if $P(\bar p)\in\Delta$ then there is no $p'\in ]\bar p, p]$ such that $P([\bar p,p'])=\{P(\bar p)\}$ and a continuous path $Q:[\bar p,p']\to\Delta^+\cup\Delta$ exists for which $Q(\tau)\in\dgm(h(\tau,\cdot))$ for $\tau\in [\bar p, p']$, $Q(\bar p)=P(\bar p)$ and $Q([\bar p,p'])\neq\{P(\bar p)\}$.
\end{enumerate}
P is called admissible for $h$ if it is admissible for $h$ at every point of its domain.
\end{definition}
In other words, $P$ is not admissible for a homotopy $h$ if it ``stops'' at a point $P(\bar p)\in\Delta$ while it could ``move on'' in $\Delta^+$. The set of all paths $P:[0,p]\to\Delta^+\cup\Delta$ admissible for $h$ is endowed with a partial order. For two paths $P_1:[0,p_1]\to\Delta^+\cup\Delta$, $P_2:[0,p_2]\to\Delta^+\cup\Delta$ admissible for $h$, we say that $P_1\preceq P_2$ if $p_1\leq p_2$ and $P_1(\tau) = P_2(\tau)$ for every $\tau\in[0,p_1]$.

In what follows, we focus on paths that are admissible for the homotopy induced on the function $f_{c(\tau)}^*$ by a continuous path $c:[0,1]\to\textrm{Reg}(f)$. With a slight abuse of notation, we talk about paths admissible for $c$.

\begin{proposition}\label{uniquePath}
Let $c:[0, 1]\to\textrm{Reg}(f)$ be a continuous path with $c(0) = (a,b)$. For every proper cornerpoint $X\in\dgm(f_{a,b}^*)$, a unique path $P:[0,1]\to\Delta^+\cup\Delta$ admissible for $c$ exists, such that
$P(0) = X$.
\end{proposition}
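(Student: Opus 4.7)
\emph{Strategy.} I would prove the proposition by a connectedness argument. Let $S\subseteq[0,1]$ denote the set of $p$ for which there exists a unique admissible path $P_p:[0,p]\to\Delta^+\cup\Delta$ for $c$ with $P_p(0)=X$. The plan is to show that $S$ is nonempty, open, and closed in $[0,1]$, so $S=[0,1]$, yielding both existence and uniqueness of the claimed path $P$. The argument then naturally splits into a local tracking step when the path lies in $\Delta^+$ and a finer analysis at points where it meets $\Delta$.

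\emph{Local tracking in $\Delta^+$.} Since $c(\tau)\in\textrm{Reg}(f)$ for every $\tau$, the diagram $\dgm(f_{c(\tau)}^*)$ contains no multiple proper cornerpoints. Because $(a,b)\mapsto f_{a,b}^*$ is continuous in the sup-norm, the stability result recalled in Remark~\ref{remNormStability} implies that $\dgm(f_{c(\tau)}^*)$ varies continuously in $\tau$ with respect to the bottleneck distance. At any $\tau_0$ with $P(\tau_0)\in\Delta^+$, I can enclose $P(\tau_0)$ in an open ball disjoint from the rest of the diagram and use stability to select the unique nearby proper cornerpoint of $\dgm(f_{c(\tau)}^*)$ for all $\tau$ close to $\tau_0$. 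This gives openness of $S$ at such points and shows $0\in S$ by extending from $\tau=0$ on a small initial interval.

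\emph{Behavior at the diagonal.} The subtle case is when $P(\bar p)\in\Delta$. Here working assumption~(2) is crucial: any two proper cornerpoints within Euclidean distance $k$ of $\Delta$ are themselves at distance at least $k$ apart. Therefore at a value of $\tau$ where a proper cornerpoint collides with $\Delta$, no other proper cornerpoint can lie in a $k$-neighborhood, and at most one branch of proper cornerpoints can emanate from $P(\bar p)$ as $\tau$ increases. This yields the required dichotomy: either no such branch exists, in which case admissibility condition~(3) of Definition~\ref{admPath} permits $P$ to remain on $\Delta$ on a small interval, or the unique branch exists and condition~(3) forces $P$ to follow it. In either subcase the extension is uniquely determined, proving openness of $S$ at $\bar p$. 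The same reasoning shows that meetings with $\Delta$ are isolated in $\tau$, hence condition~(2) of Definition~\ref{admPath} holds automatically. For closedness, given $p_n\nearrow\bar p$ with $p_n\in S$, uniform limits of the unique admissible paths on $[0,p_n]$ produce a limit path on $[0,\bar p]$; continuity of $\tau\mapsto\dgm(f_{c(\tau)}^*)$ keeps its image in $\Delta^+\cup\Delta$, and the local analysis above verifies admissibility at $\bar p$.

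\emph{Main obstacle.} The principal difficulty is the local picture at points of $\Delta$: a priori, two distinct proper cornerpoints might simultaneously reach $\Delta$ at $\tau=\bar p$ and then separate into two distinct branches as $\tau$ grows, destroying uniqueness. Working assumption~(2) is designed precisely to rule this out, and admissibility condition~(3) then pins down which branch $P$ has to follow. Once this local dichotomy is established, the global statement follows from the standard open--closed--nonempty argument applied to $S\subseteq[0,1]$.
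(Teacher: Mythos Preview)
Your proposal is correct and follows essentially the same route as the paper: both arguments use the stability of persistence diagrams to track a simple proper cornerpoint locally in $\Delta^+$, and invoke working assumption~(2) to handle the case where the path touches $\Delta$. The only cosmetic difference is organizational---the paper first proves existence via a supremum argument on the set of $\alpha$ for which an admissible extension exists (showing $\sup A\in A$ and then $\sup A=1$) and afterwards establishes uniqueness separately, whereas you bundle existence and uniqueness into a single open--closed--nonempty argument on $S$; the underlying local analysis is identical.
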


\begin{proof}\label{proofprop_uniquepath}

For every real number $\alpha\geq 0$, consider the property
\begin{center}
\begin{tabular}{p{0.95\textwidth}}
($\ast$) a path $P_{\alpha}:[0,\alpha]\to\Delta^+\cup\Delta$ admissible for $c$ exists, with $P_{\alpha}(0)=X$.
\end{tabular}
\end{center}
Define the set $A=\{\alpha\in[0,1]:\ \textrm{property ($\ast$) holds}\}$. $A$ is non-empty, since $0\in A$. Set $\bar\alpha = \sup A$. We need to show that $\bar\alpha\in A$. First, let $(\alpha_n)$ be a non-decreasing sequence of numbers of $A$ converging to $\bar\alpha$. Since $\alpha_n\in A$, for each $n$ there is a path $P_n:[0,\alpha_n]\to\Delta^+\cup\Delta$ admissible for $c$ with $P_n(0)=X$. By the Hausdorff maximal principle, we can consider a maximal chain of paths $P_{n}$ and define a function $P_{\bar\alpha}': [0,\bar\alpha)\to\Delta^+\cup\Delta$ by setting $P_{\bar\alpha}'(\tau)=P_{n}(\tau)$ for any $P_{n}$ in the maximal chain whose domain contains $\tau$. 

In particular, the function $P_{\bar\alpha}'$ is such that $P_{\bar\alpha}'(0)=X$ and $P_{\bar\alpha}'(\tau)\in\dgm(f_{c(\tau)}^*)$ for all $\tau\in [0,\bar\alpha)$. However, to prove that $\bar\alpha\in A$ we still need to show that $P_{\bar\alpha}'$ can be continuously extended to the point $\bar\alpha$. The localization of cornerpoints~\cite[Prop.~3.8]{CeDi*13} implies that, possibly by extracting a convergent subsequence, we can assume that the limit $\lim_n P_n(\alpha_n)=\lim_n P_{\bar\alpha}'(\alpha_n)$ exists. By the 1-dimensional Stability Theorem~\cite[Thm.~3.13]{CeDi*13}, we have that $\lim_n P_{\bar\alpha}'(\alpha_n)\in\dgm(f_{c(\bar\alpha)}^*)$. Now, the function $P_{\bar\alpha}'$ can be extended to a path $P_{\bar\alpha}:[0,\bar\alpha]\to\Delta^+\cup\Delta$ by setting $P_{\bar\alpha}(\bar\alpha)=\lim_n P_{\bar\alpha}'(\alpha_n)$. It is easy to check that $P_{\bar\alpha}$ is admissible for $c$, and hence $\bar\alpha\in A$.

Last, we prove by contradiction that $\bar\alpha=1$. Suppose that $\bar\alpha<1$. If $P_{\bar\alpha}(\bar\alpha)\not\in\Delta$, again by the 1-dimensional Stability Theorem and the fact that $c(\bar\alpha)\in\textrm{Reg}(f)$, for any sufficiently small $\varepsilon > 0$ we could take a real number $\eta >0$ such that there is exactly one proper cornerpoint $X'(\tau)\in\dgm(f_{c(\tau)}^*)$ with $d(X'(\tau),P_{\bar\alpha}(\bar\alpha))\leq\varepsilon$ for every $\tau$ with $\bar\alpha\leq\tau\leq\bar\alpha+\eta$. By setting $P_{\bar\alpha}(\tau)=X'(\tau)$ for every such $\tau$, we would get a continuous path that extends $P_{\bar\alpha}$ to the interval $[0,\bar\alpha+\eta)$. We could work similarly also in case $P_{\bar\alpha}(\bar\alpha)\in\Delta$. Indeed, our working assumption (2., Section~\ref{workAss}) implies that arbitrarily close to $P_{\bar\alpha}(\bar\alpha)$ we could find at most one proper cornerpoint $X'(\tau)$, for all $\tau$ with $\bar\alpha\leq\tau\leq\bar\alpha+\eta$ and $\eta$ sufficiently small, to be used to extend $P_{\bar\alpha}$. If there is no such a proper cornerpoint, $P_{\bar\alpha}$ could be extended by setting $P_{\bar\alpha}(\tau)=P_{\bar\alpha}(\bar\alpha)$ for the same values of $\tau$. In any case, we would get a contradiction of our assumption that $\bar\alpha=\sup A$.

We now show that there is a unique path $P:[0,1]\to\Delta^+\cup\Delta$ that is admissible for $c$ and starts at $X$. Assume that another path $P':[0,1]\to\Delta^+\cup\Delta$ admissible for $c$ exists, with $X=P(0)=P'(0)$. Denote by $\bar\tau$ the greatest value for which $P(\tau)=P'(\tau)$ for all $\tau\in[0,\bar\tau]$. Since $P$ differs from $P'$, $\bar \tau<1$. By the 1-dimensional Stability Theorem, if $P(\bar\tau)\not\in\Delta$ then $P(\bar\tau)$ is a proper cornerpoint of $\dgm(f_{c(\bar\tau)}^*)$ with multiplicity strictly greater than 1, against our assumption that $c(\tau)\in\textrm{Reg}(f)$ for all $\tau\in[0,1]$. If $P(\bar\tau)\in\Delta$ then $P$ and $P'$ contradict the definition of admissible path for $c$, because of our working assumption (2., Section~\ref{workAss}). Therefore, the path $P$ must be unique.
\end{proof}
We say that \emph{$c$ transports $X$ to $X'=P(1)$ with respect to $f$}. Now, we need to define the concept of transporting a matching along a path $c: [0,1]\to\textrm{Reg}(f)\cap\textrm{Reg}(g)$ with $c(0) = (a,b)$. Suppose that $\sigma_{(a,b)}(X)=Y$. Let $\sigma_{a,b}$ be a matching between $\dgm(f_{a,b}^*)$ and $\dgm(g_{a,b}^*)$, with $(a,b)$ an element of $\textrm{Reg}(f)\cap\textrm{Reg}(g)$. We can naturally associate to $\sigma_{a,b}$ a matching $\sigma_{c(1)}:\dgm(f_{c(1)}^*)\to\dgm(g_{c(1)}^*)$. We set  $\sigma_{c(1)} (X') = Y'$ if and only if $c$ transports $X$ to $X'$ with respect to $f$ and $Y$ to $Y'$ with respect to $g$. We also say that \emph{$c$ transports $\sigma_{a,b}$ to $\sigma_{c(1)}$ along $c$ with respect to the pair $(f,g)$}.

Following the same line of proof of Proposition~\ref{uniquePath}, we can also prove the following result.

\begin{proposition}\label{propGs}
Let $G(s,x) := (1-s)\cdot\varphi(x) + s\cdot\psi(x)$ be a homotopy between $\varphi,\psi:M\to\R$. Then, for every $X\in\dgm(\varphi)$ of multiplicity 1 and every sufficiently small $\varepsilon >0$, a unique path $P:[0,\varepsilon]\to\Delta^+\cup\Delta$ exists, such that $P(0) = X$ and $P$ is admissible for $G$ at any $\tau\in[0,\varepsilon]$.\end{proposition}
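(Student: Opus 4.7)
The plan is to essentially reuse the architecture of the proof of Proposition~\ref{uniquePath}, but replacing the global regularity hypothesis $c(\tau)\in\textrm{Reg}(f)$ with a purely local hypothesis, namely that the cornerpoint $X$ has multiplicity $1$ in $\dgm(\varphi)$. The reason the statement is only for \emph{sufficiently small} $\varepsilon$ is precisely that we have regularity only at $\tau = 0$ and propagate it by continuity, so the argument will produce an $\varepsilon_0 > 0$ beyond which we cannot claim uniqueness.

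First I would set up the local picture at $\tau = 0$. Since $X$ has multiplicity $1$ in $\dgm(\varphi)$ and cornerpoints of a persistence diagram form a multiset of isolated points in $\Delta^*$, there exists an open disk $U$ centered at $X$ such that $X$ is the only cornerpoint of $\dgm(\varphi)$ in $\overline{U}$ and such that $\overline{U}$ is bounded away from every other cornerpoint of $\dgm(\varphi)$ and, if $X\in\Delta^+$, from $\Delta$. Because $G(s,\cdot)$ depends continuously on $s$ in the $\sup$-norm, the $1$-dimensional Stability Theorem (\cite[Thm.~3.13]{CeDi*13}) implies $d_B(\dgm(\varphi),\dgm(G(\tau,\cdot)))\to 0$ as $\tau\to 0^+$. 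Choosing $\varepsilon_0 > 0$ small enough, every cornerpoint of $\dgm(G(\tau,\cdot))$ for $\tau\in[0,\varepsilon_0]$ that is farther than some fixed threshold from $\Delta$ lies either in $U$ or outside a larger disk $U'\supset U$, and the total multiplicity of cornerpoints inside $U$ equals $1$. Hence for each such $\tau$ there is a unique proper cornerpoint $X(\tau)\in U\cap\dgm(G(\tau,\cdot))$, or $U$ contains only points of $\Delta$.

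For existence, I would then imitate the supremum argument of Proposition~\ref{uniquePath}: let $A$ be the set of $\alpha\in[0,\varepsilon_0]$ for which an admissible path $P_\alpha:[0,\alpha]\to\Delta^+\cup\Delta$ with $P_\alpha(0)=X$ exists, consider a maximal chain, and use the localisation of cornerpoints together with the Stability Theorem to show that $\sup A\in A$ and that $\sup A = \varepsilon_0$. The only genuine new case compared to Proposition~\ref{uniquePath} is what happens when the tracked cornerpoint $X(\tau)$ reaches $\Delta$: condition (3) of Definition~\ref{admPath} forces us to decide whether the path must leave $\Delta$ again. Here the working assumption~(2) of Section~\ref{workAss} applies exactly as in Proposition~\ref{uniquePath}, guaranteeing that at most one proper cornerpoint can re-emerge near $P_{\bar\alpha}(\bar\alpha)$, so the extension is well-defined either by following that cornerpoint or by staying on $\Delta$.

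For uniqueness, I would argue by contradiction as in Proposition~\ref{uniquePath}: given two admissible paths $P,P'$ with $P(0)=P'(0)=X$, consider $\bar\tau = \sup\{\tau:P|_{[0,\tau]}=P'|_{[0,\tau]}\}$. If $P(\bar\tau)\notin\Delta$, the Stability Theorem plus the local uniqueness of the cornerpoint $X(\bar\tau)$ inside $U$ forces $P(\tau)=P'(\tau)=X(\tau)$ slightly beyond $\bar\tau$, contradicting the definition of $\bar\tau$; if $P(\bar\tau)\in\Delta$, the argument again invokes working assumption~(2) and admissibility condition (3). The main obstacle, and the reason a separate proof is needed rather than a direct citation of Proposition~\ref{uniquePath}, is precisely the bookkeeping in the diagonal case, where condition~(3) of Definition~\ref{admPath} and the non-regularity of intermediate parameters in the homotopy $G$ must be carefully reconciled; all the other ingredients (the Stability Theorem, localisation of cornerpoints, the Hausdorff-maximality/supremum scheme) transfer verbatim from Proposition~\ref{uniquePath}.
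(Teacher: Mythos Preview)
Your proposal is correct and follows essentially the same approach as the paper, which in fact gives no separate argument and simply states that the result is obtained ``following the same line of proof of Proposition~\ref{uniquePath}.'' You have spelled out precisely how that transfer works---replacing the global regularity assumption $c(\tau)\in\textrm{Reg}(f)$ by the local multiplicity-$1$ hypothesis at $X$ and propagating it via stability to a small interval $[0,\varepsilon_0]$---so your write-up is, if anything, more detailed than what the paper provides.
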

We say that \emph{$G$ transports $X$ to $X'$}.

We are now ready to introduce the coherent 2-dimensional matching distance.

\begin{definition}\label{CD}
Choose a point $(a,b)\in\textrm{Reg}(f)\cap\textrm{Reg}(g)$. Let $\Gamma$ be the set of all continuous paths $c:[0,1]\to Reg(f)\cap Reg(g)$ with $c(0)=(a,b)$. Let $S$ be the set of all matchings $\sigma:\dgm(f_{c(0)}^*)\to\dgm(g_{c(0)}^*)$. For every $\sigma\in S$ and every $c\in\Gamma$, the symbol $T^{(f,g)}_c(\sigma)$ will denote the matching obtained by transporting $\sigma$ along $c$ with respect to the pair $(f,g)$. We define the \emph{coherent 2-dimensional matching distance} $CD_{match}(\beta_f,\beta_g)$ as 
\begin{equation}
CD_{match}(\beta_f,\beta_g)=\max\left\{
\min_{\sigma\in S} \sup_{c\in \Gamma}\mbox{cost}\left(T^{(f,g)}_c(\sigma)\right),\gamma_\infty\right\},
\end{equation}
with 
\begin{itemize}
\item $\textrm{cost}\left(T^{(f,g)}_c(\sigma)\right)$ the cost of the matching $T^{(f,g)}_c(\sigma)$ with respect to the max-norm;
\item $\gamma_\infty$ the maximum distance between the cornerpoint at infinity of $f_{a,b}^*$ and the cornerpoint at infinity of $g_{a,b}^*$ varying $(a,b)$ in $P(\Lambda)$ for degrees $0$ and $m$, and $0$ for the other degrees.
\end{itemize}
\end{definition}
The following statements hold, proving that $CD_{match}$ is well-defined and satisfies the properties of a pseudo-distance.  

\begin{proposition}\label{prop1}
The definition of  $CD_{match}(\beta_f,\beta_g)$ does not depend on the choice of the point $(a,b)\in\textrm{Reg}(f)\cap\textrm{Reg}(g)$.
\end{proposition}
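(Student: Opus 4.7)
The plan is to show that if $(a,b), (a',b') \in \textrm{Reg}(f)\cap \textrm{Reg}(g)$ give two candidate starting points, then the quantity $\min_{\sigma \in S}\sup_{c\in\Gamma}\mathrm{cost}(T^{(f,g)}_c(\sigma))$ is the same for both. The $\gamma_\infty$ term in Definition~\ref{CD} is manifestly independent of the choice, so only the first term needs attention. First, since $P(\Lambda)=\,]0,1[\,\times\mathbb{R}$ is a connected $2$-manifold and, by Working Assumption~1, the set of singular pairs for $f$ and for $g$ is discrete, the set $\textrm{Reg}(f)\cap\textrm{Reg}(g)$ is path-connected. Fix a continuous path $\gamma:[0,1]\to\textrm{Reg}(f)\cap\textrm{Reg}(g)$ from $(a,b)$ to $(a',b')$, and let $S$, $S'$ denote the matching sets at $(a,b)$ and $(a',b')$. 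Transporting matchings along $\gamma$ induces a map $\Phi:=T^{(f,g)}_\gamma\colon S\to S'$.

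Next I establish two functoriality properties of transport. \emph{Concatenation:} if $c_1\colon[0,1]\to\textrm{Reg}(f)\cap\textrm{Reg}(g)$ ends at $c_2(0)$, then the concatenation $c_1\cdot c_2$ satisfies
\begin{equation*}
T^{(f,g)}_{c_1\cdot c_2}(\sigma)=T^{(f,g)}_{c_2}\!\left(T^{(f,g)}_{c_1}(\sigma)\right).
\end{equation*}
This follows directly from Proposition~\ref{uniquePath}: concatenating admissible paths for $c_1$ and for $c_2$ (with matched endpoints) produces an admissible path for $c_1\cdot c_2$, and uniqueness forces equality with the transported one. \emph{Homotopy invariance:} as recalled in Section~\ref{MS}, transport along a path depends only on its relative homotopy class, so for the null-homotopic loops $\gamma\cdot\gamma^{-1}$ and $\gamma^{-1}\cdot\gamma$ we get identity transports. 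Combined with concatenation, this gives $T^{(f,g)}_{\gamma^{-1}}\circ T^{(f,g)}_\gamma=\mathrm{id}_S$ and $T^{(f,g)}_\gamma\circ T^{(f,g)}_{\gamma^{-1}}=\mathrm{id}_{S'}$, so $\Phi$ is a bijection between $S$ and $S'$.

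Now I compare suprema. Write $\Gamma$ and $\Gamma'$ for paths starting at $(a,b)$ and $(a',b')$ respectively. For each $c\in\Gamma$, set $c':=\gamma^{-1}\cdot c\in\Gamma'$; it has the same terminal point as $c$. Using concatenation and $T^{(f,g)}_{\gamma^{-1}}\circ\Phi=\mathrm{id}_S$,
\begin{equation*}
T^{(f,g)}_{c'}(\Phi(\sigma))=T^{(f,g)}_{c}\!\left(T^{(f,g)}_{\gamma^{-1}}(\Phi(\sigma))\right)=T^{(f,g)}_c(\sigma),
\end{equation*}
so the costs coincide. The symmetric construction with $\gamma$ in place of $\gamma^{-1}$ produces a $c\in\Gamma$ for every $c'\in\Gamma'$ realizing the same matching. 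Hence, for every $\sigma\in S$,
\begin{equation*}
\sup_{c\in\Gamma}\mathrm{cost}\!\left(T^{(f,g)}_c(\sigma)\right)=\sup_{c'\in\Gamma'}\mathrm{cost}\!\left(T^{(f,g)}_{c'}(\Phi(\sigma))\right),
\end{equation*}
and since $\Phi\colon S\to S'$ is a bijection, taking the minimum over $\sigma$ on each side yields the same value.

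The main obstacle is the second paragraph: rigorously pushing the informal ``transport depends only on the homotopy class of the path'' into the functorial identities $T^{(f,g)}_{\gamma^{-1}}\circ T^{(f,g)}_\gamma=\mathrm{id}$. The concatenation identity is an immediate consequence of the uniqueness part of Proposition~\ref{uniquePath} (possibly reparameterizing and invoking Proposition~\ref{propGs} around the gluing instant to handle the potentially varying multiplicity behaviour of the endpoint), while the homotopy invariance is taken from~\cite{CeEtFr13}; once both are in place, the rest of the argument is a straightforward bijective relabelling of the quantities in Definition~\ref{CD}.
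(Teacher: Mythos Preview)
Your argument follows the same route as the paper's: pick a connecting path between the two basepoints, use concatenation functoriality of transport, and observe that the map $\sigma\mapsto T^{(f,g)}_{\gamma}(\sigma)$ sets up a bijection $S\to S'$ under which every supremum on one side is matched by the corresponding supremum on the other. The paper compresses all of this into the single identity $T^{(f,g)}_c(\sigma)=T^{(f,g)}_{c\ast c'^{-1}}\bigl(T^{(f,g)}_{c'}(\sigma)\bigr)$, whereas you spell the steps out; you are also more careful than the paper in noting why $\textrm{Reg}(f)\cap\textrm{Reg}(g)$ is path-connected.

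One remark: your appeal to homotopy invariance to obtain $T^{(f,g)}_{\gamma^{-1}}\circ T^{(f,g)}_{\gamma}=\mathrm{id}_S$ is heavier machinery than needed, and you yourself flag it as the ``main obstacle''. In fact this identity follows directly from the uniqueness clause of Proposition~\ref{uniquePath}: if $P$ is the unique admissible path for $\gamma$ with $P(0)=X$, then the time-reversal $\tau\mapsto P(1-\tau)$ is an admissible path for $\gamma^{-1}$ starting at $P(1)$ and ending at $X$, so by uniqueness it is \emph{the} transport of $P(1)$ along $\gamma^{-1}$. Thus $T_{\gamma^{-1}}\circ T_{\gamma}=\mathrm{id}$ (and symmetrically for the other composition) without ever invoking homotopy invariance. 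This is essentially what the paper's one-line proof is relying on. With this simplification your argument is complete and matches the paper's.
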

\begin{proof}\label{proofprop1}
Let us choose another basepoint $(a',b')\in\textrm{Reg}(f)\cap\textrm{Reg}(g)$. We can take a path $c'\in\Gamma$ with $c'(1)=(a',b')$. It is sufficient to observe that $T^{(f,g)}_c(\sigma)=T^{(f,g)}_{c\ast c'^{-1}}\left(T^{(f,g)}_{c'}(\sigma)\right)$
for every $\sigma\in S$ and every $c\in \Gamma$, where $\ast$ denotes the concatenation of paths and $c'^{-1}$ is the inverse path of $c'$, i.e., $c'^{-1}(t):=c'(1-t)$.
\end{proof}
\begin{proposition}\label{prop2}
$CD_{match}(\beta_f,\beta_g)$ is a pseudo-distance.
\end{proposition}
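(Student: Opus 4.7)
The plan is to verify the three pseudo-distance axioms in turn; non-negativity is immediate. For $CD_{match}(\beta_f,\beta_f)=0$, I take $\sigma$ to be the identity on $\dgm(f_{a,b}^*)$: two applications of Proposition~\ref{uniquePath} to the same pair $(c,f)$ produce the same admissible lift of each proper cornerpoint, so $T_c^{(f,f)}(\mathrm{id})$ is the identity on $\dgm(f_{c(1)}^*)$ and has cost zero, while $\gamma_\infty=0$. For symmetry, inverting a matching preserves its max-norm cost and induces a bijection between $S_{(f,g)}$ and $S_{(g,f)}$; Proposition~\ref{uniquePath} also gives $T_c^{(g,f)}(\sigma^{-1})=(T_c^{(f,g)}(\sigma))^{-1}$. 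Since $\gamma_\infty$ is manifestly symmetric, so is $CD_{match}$.

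The substantive step is the triangle inequality for three functions $f,g,h$. First I would pick a common basepoint $(a,b)\in\textrm{Reg}(f)\cap\textrm{Reg}(g)\cap\textrm{Reg}(h)$, which exists because the singular pairs of each function form a discrete subset of $P(\Lambda)$ (working assumption~1) and a finite union of discrete sets is still discrete; Proposition~\ref{prop1} legitimates this choice in all three relevant coherent distances simultaneously. Next, I would take matchings $\sigma_1\in S_{(f,g)}$ and $\sigma_2\in S_{(g,h)}$ that (nearly) realize the respective minima, and test the $(f,h)$-side with $\sigma:=\sigma_2\circ\sigma_1$. The crucial algebraic identity is
\begin{equation*}
T_c^{(f,h)}(\sigma_2\circ\sigma_1)=T_c^{(g,h)}(\sigma_2)\circ T_c^{(f,g)}(\sigma_1),
\end{equation*}
valid for every continuous $c:[0,1]\to\textrm{Reg}(f)\cap\textrm{Reg}(g)\cap\textrm{Reg}(h)$ starting at $(a,b)$: it follows directly from the uniqueness clause of Proposition~\ref{uniquePath} applied separately to $f$, $g$, $h$, since the lifts of $X$, $\sigma_1(X)$ and $\sigma_2(\sigma_1(X))$ are each uniquely determined. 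Applying the max-norm triangle inequality to each matched triple and taking the maximum over matched pairs yields
\begin{equation*}
\cost(T_c^{(f,h)}(\sigma))\le\cost(T_c^{(f,g)}(\sigma_1))+\cost(T_c^{(g,h)}(\sigma_2)).
\end{equation*}
Taking suprema over $c$ and the minimum over $\sigma_1,\sigma_2$ proves the inequality for the main term. The pointwise max-norm triangle inequality, then the sup over $P(\Lambda)$, gives $\gamma_\infty^{(f,h)}\le\gamma_\infty^{(f,g)}+\gamma_\infty^{(g,h)}$, and the elementary bound $\max(A_1+A_2,B_1+B_2)\le\max(A_1,B_1)+\max(A_2,B_2)$ finally assembles the two contributions under the outer \(\max\) of Definition~\ref{CD}.

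The obstacle I anticipate as genuinely nontrivial is that the $\sup_c$ in $CD_{match}(\beta_f,\beta_h)$ ranges over paths merely in $\textrm{Reg}(f)\cap\textrm{Reg}(h)$, which may cross singular pairs of $g$, whereas the composition identity above requires triply-regular paths. I would resolve this via approximation: given any $c$ in $\textrm{Reg}(f)\cap\textrm{Reg}(h)$ with $c(0)=(a,b)$, perturb $c$ rel endpoints inside the open set $\textrm{Reg}(f)\cap\textrm{Reg}(h)$ to a homotopic path $c'$ avoiding the discrete singular set of $g$. Homotopic paths in $\textrm{Reg}(f)\cap\textrm{Reg}(h)$ induce the same transport, again by uniqueness in Proposition~\ref{uniquePath} combined with continuous dependence of the lifted endpoint on the path and discreteness of $\dgm(f_{c(1)}^*)$, $\dgm(h_{c(1)}^*)$. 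Hence $\cost(T_c^{(f,h)}(\sigma))=\cost(T_{c'}^{(f,h)}(\sigma))$, and the supremum over triply-regular paths equals that over all paths in $\textrm{Reg}(f)\cap\textrm{Reg}(h)$, closing the argument.
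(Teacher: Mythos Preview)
Your proof is correct and follows the same core approach as the paper---both hinge on the transport-composition identity $T_c^{(f,h)}(\sigma_2\circ\sigma_1)=T_c^{(g,h)}(\sigma_2)\circ T_c^{(f,g)}(\sigma_1)$ and the resulting cost inequality. Your version is in fact more thorough: the paper's proof is a two-line sketch that silently restricts to paths in $\textrm{Reg}(f)\cap\textrm{Reg}(g)\cap\textrm{Reg}(h)$, whereas you explicitly verify the identity and symmetry axioms, treat the $\gamma_\infty$ contribution, and close via a perturbation/homotopy-invariance argument the gap between $(f,h)$-regular and triply-regular path classes that the paper leaves unaddressed.
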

\begin{proof}\label{proofprop2}
It is sufficient to observe that if two matchings $\sigma:\dgm(f_{a,b}^*)\to\dgm(g_{a,b}^*)$, $\tau:\dgm(g_{a,b}^*)\to\dgm(h_{a,b}^*)$ are given, then $T^{(f,h)}_c(\tau\circ\sigma)=T^{(g,h)}_c(\tau)\circ T^{(f,g)}_c(\sigma)$  for every $c\in\Gamma$ taking values in $\textrm{Reg}(f)\cap\textrm{Reg}(g)\cap\textrm{Reg}(h)$. This implies that $\mbox{cost}\left(T^{(f,h)}_c(\tau\circ\sigma)\right)\le\mbox{cost}\left(T^{(g,h)}_c(\tau)\right)+\mbox{cost}\left(T^{(f,g)}_c(\sigma)\right)$. Hence the triangle inequality follows.
\end{proof}
The next result shows the stability of the coherent 2-dimensional matching distance.

\begin{theorem}\label{thmstability}
It holds that $CD_{match}(\beta_f,\beta_g)\le\|f-g\|_\infty$.
\end{theorem}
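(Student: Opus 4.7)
The plan is to exhibit, at a chosen basepoint $(a,b)\in\textrm{Reg}(f)\cap\textrm{Reg}(g)$, a single matching $\sigma\in S$ whose transport along every $c\in\Gamma$ has cost at most $\|f-g\|_\infty$; this will bound the min-sup term in Definition~\ref{CD}. The natural candidate is the matching $\sigma_{a,b}:\dgm(f_{a,b}^*)\to\dgm(g_{a,b}^*)$ obtained by following each proper cornerpoint of $\dgm(f_{a,b}^*)$ through the 1D linear homotopy
\[
G_{a,b}(s,x):=(1-s)\,f_{a,b}^*(x)+s\,g_{a,b}^*(x),
\]
iterating Proposition~\ref{propGs} along $s\in[0,1]$. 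The first preliminary ingredient is the pointwise estimate $\|f_{a',b'}^*-g_{a',b'}^*\|_\infty\le\|f-g\|_\infty$ valid for every $(a',b')\in P(\Lambda)$, which I would verify from the explicit formula for $f_{a',b'}^*$ together with the inequality $|\max(A,B)-\max(C,D)|\le\max(|A-C|,|B-D|)$: the normalizing factor $\min\{a',1-a'\}$ cancels the $1/a'$ and $1/(1-a')$ denominators. Combined with the 1D Stability Theorem~\cite[Thm.~3.13]{CeDi*13} applied to $G_{a',b'}$, this shows that the analogous linear-homotopy matching $\sigma_{a',b'}$ has $\mathrm{cost}(\sigma_{a',b'})\le\|f-g\|_\infty$ for every $(a',b')\in P(\Lambda)$, and in particular $\mathrm{cost}(\sigma_{a,b})\le\|f-g\|_\infty$.

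The heart of the argument is then the compatibility identity
\[
T^{(f,g)}_c(\sigma_{a,b})=\sigma_{c(1)}\qquad\text{for every }c\in\Gamma.
\]
Fix a proper cornerpoint $X\in\dgm(f_{a,b}^*)$, let $Y=\sigma_{a,b}(X)$, and let $\tau\mapsto X_\tau\in\dgm(f_{c(\tau)}^*)$, $\tau\mapsto Y_\tau\in\dgm(g_{c(\tau)}^*)$ be the unique admissible lifts provided by Proposition~\ref{uniquePath}. I would prove that the set
\[
T=\{\tau\in[0,1]\ :\ \sigma_{c(\tau)}(X_\tau)=Y_\tau\}
\]
equals $[0,1]$ by connectedness. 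The point $\tau=0$ lies in $T$ by construction. For openness, fix $\tau_0\in T$: since $c(\tau_0)\in\textrm{Reg}(f)\cap\textrm{Reg}(g)$, the cornerpoints $X_{\tau_0}$ and $Y_{\tau_0}$ have multiplicity $1$, so Proposition~\ref{propGs} applied to $G_{c(\tau_0)}$, together with continuity of $\tau\mapsto X_\tau,Y_\tau$ (a consequence of the 1D Stability Theorem, as already used in the proof of Proposition~\ref{uniquePath}), forces $\sigma_{c(\tau)}(X_\tau)=Y_\tau$ in a whole neighborhood of $\tau_0$. For closedness, let $\tau_k\in T$ with $\tau_k\to\tau_\infty$: continuity of the lifts gives $X_{\tau_k}\to X_{\tau_\infty}$ and $Y_{\tau_k}\to Y_{\tau_\infty}$, and working assumption~(2) of Section~\ref{workAss}, which uniformly separates cornerpoints close to the diagonal, together with the simple multiplicity of the limit cornerpoints, identifies $Y_{\tau_\infty}$ as the unique legitimate partner of $X_{\tau_\infty}$ under $\sigma_{c(\tau_\infty)}$.

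Once compatibility is established, $\mathrm{cost}(T^{(f,g)}_c(\sigma_{a,b}))=\mathrm{cost}(\sigma_{c(1)})\le\|f-g\|_\infty$ by the preliminary estimate, so the min-sup term in Definition~\ref{CD} is $\le\|f-g\|_\infty$. For the $\gamma_\infty$ term, in degrees $0$ and $m$ the unique cornerpoint at infinity of $f_{a,b}^*$ is of the form $(u_f(a,b),\infty)$ with $u_f(a,b)=\min_{x\in M} f_{a,b}^*(x)$, and analogously for $g$, so $|u_f(a,b)-u_g(a,b)|\le\|f_{a,b}^*-g_{a,b}^*\|_\infty\le\|f-g\|_\infty$ uniformly in $(a,b)$; in the other degrees the contribution is $0$ by definition. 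Taking the maximum of the two terms yields $CD_{match}(\beta_f,\beta_g)\le\|f-g\|_\infty$. The main obstacle is the compatibility identity: while openness is an essentially local consequence of Proposition~\ref{propGs}, closedness requires both working assumptions of Section~\ref{workAss} to exclude that $X_{\tau_\infty}$ and $Y_{\tau_\infty}$ end up on the diagonal in a way that would allow several distinct continuations of the matched pair.
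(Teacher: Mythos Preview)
Your overall strategy coincides with the paper's: define, via the straight-line homotopy $G_{a',b'}(s,\cdot)=(1-s)f_{a',b'}^*+s\,g_{a',b'}^*$, a ``canonical'' matching $\sigma_{a',b'}$ at each parameter point, observe that $\cost(\sigma_{a',b'})\le\|f_{a',b'}^*-g_{a',b'}^*\|_\infty\le\|f-g\|_\infty$, and then argue that transporting $\sigma_{a,b}$ along any $c\in\Gamma$ lands on $\sigma_{c(1)}$. The preliminary estimate and the treatment of $\gamma_\infty$ are fine (indeed the paper omits the latter entirely).

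Where you diverge is in the mechanism for the compatibility identity $T^{(f,g)}_c(\sigma_{a,b})=\sigma_{c(1)}$. You run an open--closed argument along each path, while the paper first excises $\delta$-balls around the singular pairs of $f$ to get a closed set $C\subset P(\Lambda)$, uses compactness of $C'=\{(a,b)\in C:|b|\le K\}$ to obtain a \emph{uniform} lower bound $\varepsilon>0$ on the pairwise distance of proper cornerpoints of $\dgm(f_{a,b}^*)$ over all of $C$, and only then (for $\|f-g\|_\infty$ small relative to $\varepsilon$) concludes that $\sigma_{a,b}$ is well defined on all of $C$ and continuous; finally it reduces an arbitrary $c\in\Gamma$ to a path $\hat c\subset C$ by a homotopy rel endpoints and lets $\delta\to0$.

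This extra ingredient is not cosmetic; it plugs a uniformity gap in your argument. Your definition of $\sigma_{c(\tau)}$ by ``iterating Proposition~\ref{propGs} along $s\in[0,1]$'' requires the intermediate diagrams $\dgm\!\big((1-s)f_{c(\tau)}^*+s\,g_{c(\tau)}^*\big)$ to have only simple cornerpoints. Even though $c(\tau)\in\textrm{Reg}(f)\cap\textrm{Reg}(g)$, the minimum distance between cornerpoints of $\dgm(f_{c(\tau)}^*)$ can be arbitrarily small when $c(\tau)$ is close to a singular pair of $f$, so for a fixed $\|f-g\|_\infty$ the iteration can break down; in particular your set $T$ may not be well defined on all of $[0,1]$. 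Restricting to $c([0,1])$ and invoking compactness only gives a path-dependent threshold $\varepsilon_c$, which is not enough for the $\sup_{c\in\Gamma}$. The paper's excision-plus-homotopy trick is precisely what makes the threshold independent of $c$; incorporating it (or an equivalent uniform estimate) would complete your proof.
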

\begin{proof}\label{proofprop3}
First of all we recall that for every $(a,b)\in P(\Lambda)$ and every $x\in M$, we have
$$
\begin{array}{lll}
|f_{a,b}^*(x)-g_{a,b}^*(x)| & = & \min\{a,1-a\}\cdot\left|f_{a,b}(x)-g_{a,b}(x)\right| \leq \\
& & \min\{a,1-a\}\cdot\max\left\{\left|\frac{f_1(x)-{g}_1(x)}{a}\right|, \left|\frac{f_2(x)-{g}_2(x)}{1-a}\right|\right\} \leq \\
& & \max\left\{\left|f_1(x)-{g}_1(x)\right|, \left|f_2(x)-{g}_2(x)\right|\right\},
\end{array}
$$
in other words, $||f_{a,b}^*- g_{a,b}^*||_\infty\le \|f-g\|_\infty$. Let us consider the closed set $C$ obtained from $P(\Lambda)$ by taking away a union of small balls of radius $\delta$ around the singular pairs for $f$. Let $\varepsilon$ be the infimum for $(a,b)\in C$ of the minimum distance between any two proper cornerpoints of $\dgm(f_{a,b}^*)$. Note that, setting $K=\max_{x\in M}\{\|f(p)\|_{\infty},\|g(p)\|_{\infty}\}$, the computation of $\varepsilon$ can be accomplished on the compact set $C'=\{(a,b)\in C:|b|\leq K\}$. Indeed, for all $(a,b)\in C$ with $|b|>K$, we have that either $\dgm(f_{a,b}^*)=\dgm(f_1^*)$ or $\dgm(f_{a,b}^*)=\dgm(f_2^*)$ (analogously for $g$). Hence, by recalling our working assumption (2., Section~\ref{workAss}) and by construction of $C'$, we have that $\varepsilon >0$. 

Let $\varepsilon'$ be the infimum for $(a,b)\in C$ of the minimum distance between any two points of $\dgm(g_{a,b}^*)$. From the fact that $||f_{a,b}^*- g_{a,b}^*||_\infty\le \|f-g\|_\infty$, we know that when we change $f$ according to the homotopy $G_s:=(1-s)\cdot f+s\cdot g$, each point in the normalized persistence diagram moves by a distance not greater than  $\|f-G_s\|_\infty\le \|f-g\|_\infty$, because of the stability of 2-dimensional persistent Betti numbers under the multidimensional matching distance \cite[Thm.~4.2]{CeDi*13}. Hence we have that $\varepsilon'\ge\varepsilon - 2\|f-G_s\|_\infty\ge\varepsilon - 2\|f-g\|_\infty$.
Therefore, if $\|f-g\|_\infty$ is  small enough, we have that $\varepsilon>0$ implies $\varepsilon'>0$.

As a consequence, if $\|f-g\|_\infty$ is  small enough, for each $(a,b)\in C$ we can consider the unique matching $\sigma_{a,b}:\dgm(f_{a,b}^*)\to \dgm(g_{a,b}^*)$ obtained by changing the identity $id_{a,b}:\dgm(f_{a,b}^*)\to \dgm(f_{a,b}^*)$ according to the change of persistence diagrams induced by the homotopy $G_s^*:=(1-s)\cdot f_{a,b}^*+s\cdot g_{a,b}^*$ (see Proposition~\ref{propGs}). Formally, for each proper cornerpoint $X$ of $\dgm(f_{a,b}^*)$, we set  $\sigma_{a,b} (X) = X'$ if and only if the homotopy $G_s^*$ transports $X$ to the cornerpoint $X'$ of $\dgm(g_{a,b}^*)$. 

We have that $\cost (\sigma_{a,b})\leq\|f_{a,b}^*-g_{a,b}^*\|_\infty\le \|f-g\|_\infty$. If $\hat c$ is a continuous path from $(\bar a,\bar b)$ to $(a,b)$ with $\hat c([0,1])\subset C$, it is easy to see that the function $\sigma_{\hat c(t)}$ in the variable $t$ describes the transport 
of $\sigma_{\bar a,\bar b}$ to $\sigma_{a,b}$ made by $\hat c$ with respect to the pair $(f,g)$, because $\sigma_{\hat c(t)}$ depends continuously on $t$.

Now, let $c$ be a continuous path from $(\bar a,\bar b)$ to $(a,b)$ with $c([0,1])\subset\textrm{Reg}(f)\cap\textrm{Reg}(g)$. The image of $c$ may be not contained in $C$, but if $(\bar a,\bar b),(a,b)\in C$, $c$ is relative homotopic to a path $\hat c:[0,1]\to C$ from $(\bar a,\bar b)$ to $(a,b)$. This path $\hat c$ transports $\sigma_{\bar a,\bar b}$ to $\sigma_{a,b}$, too. 

Hence we have that $\mbox{cost}\left(T^{(f,g)}_c(\sigma_{\bar a,\bar b})\right)=\mbox{cost}\left(T^{(f,g)}_{\hat c}(\sigma_{\bar a,\bar b})\right)=\mbox{cost}\left(\sigma_{a, b}\right)\le\|f-g\|_\infty$. From this and from the fact that we can choose an arbitrarily small $\delta$, it follows that $\min_{\sigma\in S}\sup_{c\in\Gamma}\mbox{cost}\left(T^{(f,g)}_c(\sigma_{\bar a,\bar b})\right)
\le \|f-g\|_\infty$ for every $c\in\Gamma$.
\end{proof}
We conclude by observing that the definition of the coherent 2-dimensional matching distance immediately implies that it is not less informative than the usual 2-dimensional matching distance. Formally,

\begin{proposition}\label{propD_CD}
$D_{match}(\beta_f,\beta_g)\le CD_{match}(\beta_f,\beta_g)$.
\end{proposition}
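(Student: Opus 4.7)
The plan is to establish, for every $(a',b') \in P(\Lambda)$, the pointwise bound $d_B(\dgm(f_{a',b'}^*), \dgm(g_{a',b'}^*)) \le CD_{match}(\beta_f, \beta_g)$, from which $D_{match}(\beta_f, \beta_g) \le CD_{match}(\beta_f, \beta_g)$ follows by taking the supremum over $P(\Lambda)$. The observation driving the argument is almost tautological: for any $\sigma \in S$ and any $c \in \Gamma$ ending at a regular pair $(a',b')$, the transported matching $T^{(f,g)}_c(\sigma)$ is a legitimate (though not necessarily optimal) matching between the proper cornerpoints of $\dgm(f_{a',b'}^*)$ and $\dgm(g_{a',b'}^*)$, so its cost is an upper bound for the bottleneck distance of the proper part.

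I would first fix $(a',b') \in \textrm{Reg}(f) \cap \textrm{Reg}(g)$. Because the singular pairs of $f$ and $g$ are discrete in the two-dimensional parameter space $P(\Lambda) = \,]0,1[\,\times\R$ (working assumption 1), the subset $\textrm{Reg}(f) \cap \textrm{Reg}(g)$ is open and path-connected, so I can choose some $c \in \Gamma$ with $c(0) = (a,b)$ and $c(1) = (a',b')$. Under the working assumptions the number of proper cornerpoints is finite, so $S$ is finite and the outer minimum in the definition of $CD_{match}$ is attained by some $\sigma^* \in S$. Since the cornerpoints at infinity (present only in degrees $0$ and $m$) are uniquely identified, the bottleneck distance satisfies
\[
d_B(\dgm(f_{a',b'}^*), \dgm(g_{a',b'}^*)) \le \max\left\{\mathrm{cost}\!\left(T^{(f,g)}_c(\sigma^*)\right),\, d_{\infty}(a',b')\right\},
\]
where $d_{\infty}(a',b')$ is the distance between the corresponding cornerpoints at infinity. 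Bounding the first argument by $\sup_{c \in \Gamma}\mathrm{cost}(T^{(f,g)}_{c}(\sigma^*)) = \min_{\sigma \in S} \sup_{c \in \Gamma} \mathrm{cost}(T^{(f,g)}_{c}(\sigma))$ and the second by $\gamma_\infty$ yields exactly $CD_{match}(\beta_f, \beta_g)$.

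To finish, I would extend the inequality to singular pairs. For $(a',b') \in P(\Lambda) \setminus (\textrm{Reg}(f) \cap \textrm{Reg}(g))$, I approximate by a sequence of regular pairs $(a'_n, b'_n) \to (a', b')$. Since $f_{a,b}^*$ and $g_{a,b}^*$ depend continuously on $(a,b)$ in the $\sup$-norm, the one-dimensional Stability Theorem \cite[Thm.~3.13]{CeDi*13} implies that $(a,b) \mapsto d_B(\dgm(f_{a,b}^*), \dgm(g_{a,b}^*))$ is continuous, so passing to the limit transfers the inequality to $(a', b')$. Taking the supremum over $(a', b') \in P(\Lambda)$ concludes the argument. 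The only subtlety I expect to be mindful of is the separate treatment of the cornerpoints at infinity: the cost $\mathrm{cost}(T^{(f,g)}_c(\sigma))$ encodes only the proper cornerpoints, which is precisely why the summand $\gamma_\infty$ appears in the definition of $CD_{match}$ and is exactly what is needed to bound the at-infinity contribution to $d_B$.
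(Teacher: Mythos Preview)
Your argument is correct and is essentially the approach the paper has in mind; the paper gives no proof at all beyond the remark that the inequality ``immediately'' follows from the definition. What you have written is a careful unpacking of that immediacy: the transported matching $T^{(f,g)}_c(\sigma^*)$ at a regular endpoint $(a',b')$ is a competitor in the infimum defining $d_B(\dgm(f_{a',b'}^*),\dgm(g_{a',b'}^*))$, and the term $\gamma_\infty$ absorbs the single cornerpoint at infinity. Your extra care in extending the bound to singular pairs by continuity of $(a,b)\mapsto d_B$ and in invoking path-connectedness of $\textrm{Reg}(f)\cap\textrm{Reg}(g)$ (a $2$-manifold minus a discrete set) makes explicit steps the paper leaves to the reader.
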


\section*{Conclusions}
In this contribution we have introduced the notion of coherent 2-dimensional matching distance. Similarly to the classical 2-dimensional matching distance, it is based on defining a suitable family of filtrations associated with lines having a positive slope; however, this new distance only considers matchings that change coherently with respect to the filtrations which are taken into account. 

Through the paper we formally proved that the coherent matching distance distance between 2-dimensional Persistent Betti numbers is well-defined, stable and does not loose discriminative information with respect to the usual 2-dimensional matching distance. We believe that these first results make the coherent matching distance deserving further investigation, such as extending the study of its theoretical properties as well as exploring how to develop computational techniques for its practical evaluation. 

{\small
\bibliographystyle{splncs03}
\bibliography{coherentMatchingDistance_arXiv}

\begin{thebibliography}{10}
\providecommand{\url}[1]{\texttt{#1}}
\providecommand{\urlprefix}{URL }

\bibitem{BiCe*11}
Biasotti, S., Cerri, A., Frosini, P., Giorgi, D.: A new algorithm for computing
  the 2-dimensional matching distance between size functions. Pattern Recogn.
  Lett.  32(14),  1735--1746 (2011)

\bibitem{BiCe*08}
Biasotti, S., Cerri, A., Frosini, P., Giorgi, D., Landi, C.: Multidimensional
  size functions for shape comparison. Journal of Mathematical Imaging and
  Vision  32(2),  161--179 (2008)

\bibitem{CaDFFe10}
Cagliari, F., Di~Fabio, B., Ferri, M.: One-dimensional reduction of
  multidimensional persistent homology. Proceedings of the American
  Mathematical Society  138(8),  3003--3017 (2010)

\bibitem{CaSiZo10}
Carlsson, G., Singh, G., Zomorodian, A.J.: Computing multidimensional
  persistence. Journal of Computational Geometry  1(1),  72--100 (2010)

\bibitem{CaZo09}
Carlsson, G., Zomorodian, A.: The theory of multidimensional persistence.
  Discr. Comput. Geom.  42(1),  71--93 (2009)

\bibitem{CeDi*13}
Cerri, A., {Di Fabio}, B., Ferri, M., Frosini, P., Landi, C.: Betti numbers in
  multidimensional persistent homology are stable functions. Mathematical
  Methods in the Applied Sciences  36(12),  1543--1557 (2013)

\bibitem{CeEtFr13}
Cerri, A., Ethier, M., Frosini, P.: A study of monodromy in the computation of
  multidimensional persistence. In: Gonzalez-Diaz, R., Jimenez, M.J., Medrano,
  B. (eds.) Discrete Geometry for Computer Imagery, Lecture Notes in Computer
  Science, vol. 7749, pp. 192--202. Springer Berlin Heidelberg (2013)

\bibitem{CoEdHa07}
Cohen-Steiner, D., Edelsbrunner, H., Harer, J.: Stability of persistence
  diagrams. Discrete \& Computational Geometry  37(1),  103--120 (2007)

\bibitem{DaFrLa10}
d'Amico, M., Frosini, P., Landi, C.: Natural pseudo-distance and optimal
  matching between reduced size functions. Acta Applicandae Mathematicae
  109(2),  527--554 (2010)

\bibitem{EdLeZo02}
Edelsbrunner, H., Letscher, D., Zomorodian, A.J.: Topological persistence and
  simplification. vol.~28, pp. 511--533 (2002)

\bibitem{FrLa01}
Frosini, P., Landi, C.: Size functions and formal series. Appl. Algebra Engrg.
  Comm. Comput.  12(4),  327--349 (2001)

\bibitem{Le15}
Lesnick, M.: The theory of the interleaving distance on multidimensional
  persistence modules. Foundations of Computational Mathematics  15(3),
  613--650 (2015)

\end{thebibliography}
}

\end{document}